\theoremstyle{definition}
\newtheorem{theorem}{Theorem}[subsection]
\newtheorem{proposition}[theorem]{Proposition}
\newtheorem{corollary}[theorem]{Corollary}
\newtheorem{postulate}[theorem]{Postulate}
\title{Rational Adversaries and the Maintenance of Fragility: A Game-Theoretic Theory of Rational Stagnation}
\author{Daisuke Hirota}
\begin{document}
\date{}
\maketitle
\begin{abstract}
Cooperative systems often remain in persistently suboptimal yet stable states. 
This paper explains such ``rational stagnation'' as an equilibrium sustained by a rational adversary whose utility follows the \emph{principle of potential loss}, $u_D = U_{\text{ideal}} - U_{\text{actual}}$.Starting from the Prisoner's Dilemma, we show that the transformation $u_i' = a u_i + b u_j$ and the ratio of mutual recognition $w = \frac{b}{a}$ generate a fragile cooperation band 
\([w_{\min}, w_{\max}]\)
where both (C,C) and (D,D) are equilibria.
Extending to a dynamic model with stochastic cooperative payoffs \(R_t\) and intervention costs \((C_c, C_m)\),
a generalized Bellman equation yields three strategic regimes---\emph{immediate destruction}, \emph{rational stagnation}, and \emph{intervention abandonment}.
The appendix further generalizes the utility into a reference-dependent nonlinear form and proves its stability under reference shifts, ensuring robustness of the framework.
Applications to social-media algorithms and political trust illustrate how adversarial rationality can deliberately preserve fragility.
\end{abstract}
\noindent\textbf{Keywords:} rational adversary; suppressive adversary; rational stagnation; fragile cooperation band; principle of potential loss; ratio of mutual recognition ($w$); cooperative surplus ($\Phi$); Bellman equation; adversarial intervention; meta-game; social media; institutional design.
\clearpage
\tableofcontents
\clearpage

\section{Introduction}
\label{sec:introduction}

\subsection{Problem Statement and Research Motivation}

Modern cooperative systems—from digital networks to political institutions—often exhibit a paradoxical stability: they neither collapse nor achieve evolutionary progress, but instead remain in persistently suboptimal states. This paper argues that such sustained stagnation is not a symptom of irrationality or system malfunction; rather, it is the consequence of a stable equilibrium that is amenable to formal modeling. It further contends that this equilibrium is actively maintained by a form of \emph{strategic rationality} that has hitherto lain outside the formal scope of game theory.

This study models the phenomenon around the concept of the \textbf{rational adversary}. A rational adversary is an agent who derives \emph{subjective utility} not from maximizing her own within-game payoff, but from minimizing the aggregate social welfare that other players could otherwise attain. The optimal strategy of such an agent can take diverse forms, ranging from \emph{immediate destruction} of the system to the strategic \emph{maintenance of fragility}.

\subsection{Central Research Question}

Classical game theory provides a robust framework for analyzing strategic interaction under fixed rule sets. However, this framework typically does not account for agents whose primary utility derives from degrading the cooperative potential of the system itself. This observation motivates the central research question:
\begin{quote}
\textbf{How does the introduction of a \emph{rational adversary}—whose payoff is defined by the society-wide \emph{potential loss}—transform the equilibrium and temporal dynamics of \emph{cooperative systems}?}
\end{quote}
This paper addresses this question by formally modeling what I term \emph{adversarial rationality}. It treats the suppression of cooperation not as an aberration, but as a well-defined equilibrium strategy amenable to formal analysis.

\subsection{Contributions and Positioning of This Paper}

This paper advances the study of strategic interaction on three levels.

\paragraph{First, a new formalization of adversarial rationality:}
This paper introduces a novel category of rational agents, the \textbf{suppressive adversary}, whose utility is defined by the \textbf{principle of potential loss} ($u_{D}=U_{ideal}-U_{actual}$). In this formulation, the agent derives utility precisely from preventing the realization of the system’s latent capacity. This rationality has not been explicitly formalized in prior work. It is distinct in both motivation and strategy from the “value-extractive adversary” in security games and from “evasive adversaries” in adversarial learning.

\paragraph{Second, a new mechanism of sustained stagnation:}
Within this theoretical framework, we derive a distinctive dynamic equilibrium termed \textbf{rational stagnation}. In this regime, the adversary deliberately forgoes \emph{immediate destruction} in order to harvest larger future “potential loss,” thereby pursuing a forward-looking, speculative strategy. This mechanism marks a clear departure from existing explanations that attribute stagnation to emergent failures of cooperation (e.g., poverty traps) or to conservative strategies aimed at protecting existing benefits (e.g., vested interests). It shows that stagnation can be not merely a system “failure,” but a product of adversarial “governance.”

\paragraph{Third, foundations for a new research program:}
By inverting the standard mechanism-design problem, this paper lays the theoretical groundwork for \textbf{adversarial mechanism design}. Rather than maximizing social welfare, the question here is how an adversary would design the rules of the game to maximize future manipulability and fragility. This perspective offers a novel analytical lens for understanding how dysfunctional institutions, markets, and political systems can be intentionally designed and perpetuated.

\subsection{Methodological Approach and Scope of This Paper}

The \textbf{rational adversary} introduced here is not an empirical model that literally depicts the psychology of a particular individual or organization. Rather, it is positioned as an analytical metaphor—or ideal type—designed to extract and purify a specific \emph{logic} underlying complex phenomena.

The assumption of an \textbf{external adversary} in the model serves as an analytical device. It does not necessarily posit the existence of a single outside actor. Instead, it functions as a theoretical lens for analyzing dynamics oriented toward stagnation that emerge from incentive structures within the system and from the interactions among multiple actors. For example, a platform algorithm that maximizes engagement on social media may, absent any explicit hostility, come to embody the logic of the \textbf{suppressive adversary}—an instance of \emph{adversarial logic without hostility} in a \emph{cooperative system}. The aim of this paper is less to propose concrete defensive countermeasures than to offer a diagnosis that enables the structure of the problem to be understood from a new vantage point.

\subsection{Organization of This Paper}

The remainder of this paper is organized as follows. Section~\ref{sec:background} reviews the relevant literature, situates the three contributions advanced here within existing research, and identifies the conceptual gaps they address. Section~\ref{sec:analytical_results} presents a general Theorem concerning the adversary’s optimal strategy and analyzes its most salient implication, namely \textbf{rational stagnation}. Section~\ref{sec:discussion} applies the framework to two distinct types of cases: an endogenous adversary in social media and externally orchestrated polarization campaigns targeting online communities. Section~\ref{sec:conclusion} states the overall conclusions and limitations of this study, discusses implications for institutional design and AI ethics, and outlines directions for future work under the rubric of \textbf{adversarial mechanism design}.

\section{Literature Review and Conceptual Framework}
\label{sec:background}

\subsection{Introduction: Identifying the Conceptual Gap}

Classical game theory has compellingly demonstrated how rational choices under fixed rules can yield collectively irrational outcomes.\cite{Dawes1980,Hardin1968} However, the scope of that analysis typically presumes a closed system in which the rules of the game and the potential payoffs are exogenously given. This presumption creates a critical blind spot for analyzing real-world \emph{cooperative systems}: namely, the existence of agents who profit not from winning the game, but from keeping the game itself inefficient—deriving gains precisely from the failure of cooperation.

This section demonstrates how the \textbf{rational adversary} and \textbf{rational stagnation} proposed in this paper occupy a distinct place within existing theoretical families so as to bridge this conceptual gap. First, we provide a systematic taxonomy of existing models of \emph{adversarial rationality} and show that our framework establishes a new category, the \textbf{suppressive adversary}. Second, we clarify the formal and conceptual distinctions from other nonstandard utility specifications such as spite and regret.\cite{Levine1998,Loomes1982} Finally, we show that \textbf{rational stagnation} constitutes a novel dynamic equilibrium fundamentally different from established mechanisms of stagnation such as vested interests or failures of cooperation.

\subsection{A New Taxonomy of Adversarial Rationality}

Prior work modeling adversaries can be broadly classified into two categories based on the nature of their objective functions. This paper proposes a third, distinct category.

\subsubsection{Value-Extractive Adversaries}

Adversaries in this category derive utility from directly acquiring or destroying specific assets or states of the system. A representative example is the family of \textbf{security games} developed in the contexts of infrastructure defense and cybersecurity.\cite{tambe2011security} In these models, the attacker’s utility typically arises from \textbf{direct gains} such as asset destruction or capture, and is defined independently of the system’s \textbf{latent capacity}. For instance, in \textbf{Stackelberg Security Games (SSG)}, the adversary’s (attacker’s) utility is tied to the \textbf{value} of the target asset (e.g., success rewards and capture penalties).\cite{paruchuri2008playing,Korzhyk2014} Likewise, in the \textbf{Byzantine Agreement (BA)} problem, classical results establish the limits of safety and resilience that consensus protocols can satisfy, rather than specifying an explicit utility function for the adversary.\cite{Lamport1982} In essence, these adversaries seek to \textbf{extract} concrete “value” from the system as destroyers.

\subsubsection{Evasive Adversaries}

Adversaries in this category seek to achieve a separate primary goal while \emph{evading} detection or defense mechanisms. A canonical example is \textbf{adversarial machine learning}, particularly \textbf{evasion attacks} in spam filtering and related applications.\cite{Barreno2010} Here, the adversary’s primary objective (e.g., delivering spam) motivates minimal modifications to inputs that circumvent the classifier, albeit at an explicit \textbf{cost}.\cite{Dalvi2004,Lowd2005} Their utility can thus be formalized as a \textbf{trade-off} between achievement of the primary objective (e.g., misclassification or delivery success) and the costs of evasion (e.g., the magnitude of feature changes or the number of manipulable steps).\cite{Biggio2013} Moreover, an \emph{evasive} subtype that tempers attack intensity—because the disutility of being detected is comparatively large—can be positioned as a \emph{cautious (risk-averse)} attacker within this class.\cite{Barreno2010}

\subsubsection{Suppressive Adversaries}

By contrast, this paper introduces a third category: the \textbf{suppressive adversary}. The utility of this adversary is defined by what this paper terms the \textbf{principle of potential loss}, $u_{D}=U_{ideal}-U_{actual}$. Such an adversary need not steal or destroy anything from the system. Rather, utility arises from the \textbf{gap between the ideal and the actual state}—that is, the shortfall from what the system would have achieved under optimal functioning. This captures a fundamental distinction between adversaries who aim to win the game and adversaries who aim to preserve a state in which no one can win. In this sense, the adversary is a \textbf{governor} who holds the system’s potential hostage. Hereafter, for convenience, this paper refers to the suppressive adversary simply as the “adversary” or the “\textbf{rational adversary}.”

\subsection{Conceptual Distinctions from Other Nonstandard Utility Functions}

The utility specification in this paper is clearly distinguishable from existing models of nonstandard preferences, both in its formal structure and in the locus of the agent who experiences utility.

\subsubsection{Difference from Regret Theory: Locus of Utility}

In decision theory, regret theory models the emotion that arises from comparing realized outcomes with the counterfactual outcomes that would have obtained under unchosen alternatives.\cite{Loomes1982} Its utility function is sometimes written as $U(x, y) = v(x) - R!\left(v(x) - v(y)\right)$, incorporating the difference between the utility of the chosen act $x$ and that of the forgone act $y$. This structure appears, at first glance, to resemble the form $U_{ideal}-U_{actual}$ used in this paper.

However, there is a \textbf{decisive difference}. In regret theory, regret (and rejoicing) constitutes an \textbf{intrapersonal} utility experienced by the \emph{decision maker herself}, who is a player within the game. By contrast, in this paper the potential loss $u_{D}=U_{ideal}-U_{actual}$ is experienced as \textbf{positive utility} by an \textbf{external third party}, namely the \emph{rational adversary}. That is, rather than the player introspecting about her own foregone gains, an \textbf{external agent monetizes the “system-wide foregone welfare.”} Accordingly, both the \textbf{locus of utility} and the \textbf{level of the reference object} (individual counterfactual versus a system-level ideal–actual gap) fundamentally differ.

\subsubsection{Difference from Models of Spite and Envy: Object of Utility}

Prior research on spite and envy has consistently formalized preference structures in which an agent’s utility depends on \textbf{others’ payoffs}.\cite{Levine1998, Fehr1999} These models are grounded in fundamentally \textbf{relative} concerns that differ from the adversary introduced in this paper.

In \textbf{spite} models, an agent’s utility is expressed, for example, as a weighted difference between her own payoff and others’ payoffs.\cite{Levine1998}
\[
U_i = (1 - \alpha_i) u_i - \alpha_i \sum_{j \ne i} u_j
\]
Here, $U_i$ is the utility of agent $i$, $u_i$ is her own payoff, and $\alpha_i$ parameterizes the degree of spite toward others.

By contrast, \textbf{envy} (inequity aversion) models provide a more refined formalization of aversion to \textbf{relative allocations}.\cite{Fehr1999}
\[
U_i = u_i - \alpha_i \max\{u_j - u_i, 0\} - \beta_i \max\{u_i - u_j, 0\}
\]
In this specification, agent $i$’s utility decreases with both disadvantageous inequality ($u_j - u_i > 0$, corresponding to envy) and advantageous inequality ($u_i - u_j > 0$, corresponding to guilt). The ERC model likewise incorporates an agent’s \textbf{relative rank} (share) in total payoffs into utility.\cite{Bolton2000} Relatedly, work on the “Joy of Destruction” shows that the act of destroying others’ resources can itself yield positive utility.\cite{abbink2009pleasure}

The \textbf{suppressive adversary}, by contrast, is wholly \emph{indifferent} to \textbf{relative payoffs among players}. Her utility depends only on the \textbf{absolute loss of social surplus} relative to what the system ought to produce under ideal functioning—namely, the following \textbf{potential differential} in a two-player (dyadic) system:
\[
    {\text{potential differential}} = 2R - (u_A + u_B) = U_{ideal}-U_{actual}
\]
Here, $R$ is the benchmark payoff each player would obtain under mutual cooperation. Whether $u_A > u_B$ or $u_B > u_A$ is immaterial to the adversary; The adversary prefers that the sum $(u_A + u_B)$ be minimized.

It follows that the focus of analysis in this paper shifts away from \textbf{interpersonal competition and relative standing} toward \textbf{strategies that intentionally maintain or amplify system-wide dysfunction}.

\subsection{Comparison Between Rational Stagnation and Existing Mechanisms of Stagnation}

The \textbf{rational stagnation} derived in this paper occupies a distinct position among existing theoretical accounts that explain the persistence of suboptimal equilibria.

\subsubsection{Difference from Vested-Interests Models: Motivation and Temporal Horizon}

Public choice theory explains the persistence of inefficient institutions by the presence of \emph{vested interests}.\cite{Acemoglu2000} Vested actors resist reforms that threaten the flow of benefits they derive from the status quo.

The decisive difference between those models and the present framework lies in the source of benefits and the temporal outlook of strategic behavior. The motivation of vested interests is essentially conservative and backward-looking: their rents are drawn from the current stream of realized benefits, $U_{\text{actual}}$, and their objective is to preserve that flow.

By contrast, the \textbf{rational adversary} in this paper is speculative and forward-looking. Her gains are given by the very gap between the ideal and the actual—$U_{ideal} - U_{actual}$. The adversary maintains fragility in order to cultivate the possibility of future losses; that is, to maximize $E[U_{ideal}(t+1) - U_{actual}(t+1)]$. This strategy may even welcome the system’s growth potential, because a higher $U_{\text{ideal}}$ enlarges the harvestable \emph{potential loss}.

\subsubsection{Difference from Poverty-Trap Models: Nature of the Mechanism}

A poverty trap denotes a structural mechanism by which poverty as an initial condition is self-reinforcing and reproduced over time.\cite{Banerjee1993, dasgupta1986inequality} In game-theoretic terms, it can be described as a failure of cooperation whereby each agent, taking others’ investment levels as given, lacks the incentive to unilaterally move to the high equilibrium, so that the aggregate settles at a Pareto-inferior low equilibrium.\cite{cooper1988coordinating, azariadis1990threshold, murphy1989industrialization} This form of stagnation is an \textbf{autonomously emergent equilibrium} arising from decentralized interactions among actors; it does not posit intentional control by a specific external actor.

By contrast, \textbf{rational stagnation} models a situation in which such an endogenous stagnation structure is \textbf{superimposed by exogenous intervention}. Specifically, a \textbf{suppressive adversary} actively suppresses the system’s latent growth capacity, thereby \textbf{maintaining and amplifying} the self-reinforcing structure characteristic of poverty traps. Stagnation is no longer a merely incidental outcome; it is stabilized as a \textbf{strategically preserved low-level equilibrium}.

\subsection{Conclusion: The Redefined Conceptual Gap}

On the basis of the foregoing analysis, the conceptual lacuna addressed by this paper can be redefined as follows. Prior research has theorized primarily: (1) value-extractive adversaries targeting specific assets (security games); (2) internal players focused on relative payoffs or affective rewards (models of spite, envy, and regret); (3) conservative strategies aimed at preserving realized benefits (vested-interests models); and (4) emergent stagnation driven by bounded rationality or informational constraints (poverty traps).

While each of these strands has made important contributions to explaining real-world inefficiencies, none has explicitly modeled a rational agent who \textbf{intentionally suppresses the system’s potential efficiency}—that is, its \textbf{latent capacity}—and \textbf{derives utility from the unrealized portion} of social surplus over time; in short, the \textbf{suppressive adversary} that \emph{intertemporally monetizes} the “ideal–actual gap.”

This paper addresses that theoretical void by introducing the \textbf{principle of potential loss} and a \emph{dynamic optimization} framework, thereby offering the first formalization of the rational maintenance of stagnation.

\section{Analytical Results}
\label{sec:analytical_results}

\subsection{Model Formulation}
\label{sec:model_formulation_jp}

As the analytical foundation, consider the classical Prisoner’s Dilemma (PD) with players $A$ and $B$, parameterized by $T > R > P > S$. The socially optimal outcome is the cooperative equilibrium $(C,C)$, which yields the total payoff $U_{\text{ideal}} = 2R$, whereas the static Nash equilibrium is the defection profile $(D,D)$, which yields $U_{\text{actual}} = 2P$.

Introduce an external \textbf{rational adversary} $D$. The adversary influences the game by manipulating the players’ \emph{subjective utility} functions. For each player $i \in \{A,B\}$, the transformed utility is given by
\begin{equation} \label{eq:modified_utility_jp}
u'_i = a u_i + b u_j, \quad (i \neq j, a>0, b \ge 0)
\end{equation}
where $u_i$ denotes the objective payoff in the original PD matrix. The \textbf{ratio of mutual recognition (w)}, defined by $w = b/a$, serves as a proxy for the degree of interpersonal identification. The equilibrium of the transformed game depends on $w$. As shown in Appendix~\ref{appendix:math_jp}, there exist critical thresholds $w_{\min}$ and $w_{\max}$ such that equilibrium states fall into three phases:
\begin{itemize}
    \item $w < w_{\min}$: the unique equilibrium is $(D,D)$ (distrust phase).
    \item $w_{\min} \le w \le w_{\max}$: both $(C,C)$ and $(D,D)$ are Nash equilibria (the \textbf{fragile cooperation band}).
    \item $w > w_{\max}$: the unique equilibrium is $(C,C)$ (cooperation phase).
\end{itemize}

The adversary’s utility is defined by the \textbf{principle of potential loss} and is tied to the players’ objective outcomes:
\begin{equation} \label{eq:adversary_utility_jp}
u_D = U_{\text{ideal}} - U_{\text{actual}} = 2R - (u_A + u_B)
\end{equation}
The adversary’s objective is to manipulate the psychological parameters $(a,b)$ so as to induce an equilibrium that minimizes $U_{\text{actual}}$.

\paragraph{Remark (Emergence of Asymmetric Equilibria)}
As shown in Appendix~\ref{sec:appendix_equilibrium_jp}, $w_{\min} = \frac{T-R}{R-S}$ and $w_{\max} = \frac{P-S}{T-P}$. Hence, when $w_{\min} \le w_{\max}$, both $(C,C)$ and $(D,D)$ are Nash equilibria, and the so-called \emph{dual equilibrium band} obtains. By contrast, when $w_{\min} > w_{\max}$, this band vanishes and the equilibrium converges exclusively to the asymmetric profiles $(C,D)$ or $(D,C)$. In this case, the game is equivalent to the canonical Hawk--Dove type, implying that the model’s dual equilibrium band constitutes a critical structure that transitions to asymmetric equilibria once the \emph{ratio of mutual recognition} $w$ becomes sufficiently unbalanced.

\subsection{General Theory of Adversarial Strategy}
\label{sec:general_theory_jp}

To determine the optimal strategy of the adversary in repeated games, we construct a general model that incorporates more realistic elements (for the rigorous formulation, see Appendix~\ref{sec:appendix_dynamic_general_jp}). This model endogenizes two features: (1) the cooperative payoff $R_t$ may fluctuate stochastically (both upward and downward), and (2) the adversary’s intervention entails a cost $C$.

From this general model, we derive the following Theorem that governs the adversary’s optimal strategy.

\begin{theorem}[General Theorem on the Adversary’s Optimal Strategy]
\label{thm:general_dichotomy_jp}
In a repeated game, the optimal strategy of a \emph{rational adversary} is determined by the trade-off among the system’s expected future growth, its intrinsic risk of collapse, and the adversary’s intervention cost. This trade-off gives rise to three \emph{strategic regimes}:
\begin{enumerate}
    \item \textbf{Immediate destruction}: When the expected growth rate of the cooperative surplus is low or the intervention cost is sufficiently small, the adversary’s optimal strategy is to collapse the system immediately so as to maximize short-term gains.
    \item \textbf{Rational stagnation}: When the expected growth rate of the cooperative surplus is sufficiently high and the cost of maintaining fragility is reasonable, the adversary’s optimal strategy is to delay immediate destruction in order to secure larger future gains, keeping the system within the \textbf{fragile cooperation band}.
    \item \textbf{Intervention abandonment}: When intervention costs exceed any anticipated gains, the adversary refrains from intervening.
\end{enumerate}
\end{theorem}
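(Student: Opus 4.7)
The plan is to cast the adversary's dynamic problem as a Bellman equation in the state variable $R_t$ and then partition the parameter space by pairwise comparison of the three action values. At each period, given the current cooperative benchmark $R_t$, I would define the action-specific values $V_{\mathrm{dest}}(R_t) = 2R_t - 2P - C_c$ (the one-shot potential-loss harvest from collapsing the system, net of the collapse cost), $V_{\mathrm{stag}}(R_t) = -C_m + \delta\,\mathbb{E}[V(R_{t+1})\mid R_t]$ (pay the maintenance cost $C_m$ to keep $(a,b)$ inside the fragile cooperation band and collect the discounted continuation value), and $V_{\mathrm{abn}}(R_t) = 0$. The Bellman operator $V(R_t) = \max\{V_{\mathrm{dest}}(R_t),\, V_{\mathrm{stag}}(R_t),\, V_{\mathrm{abn}}(R_t)\}$ is a $\delta$-contraction on the space of bounded measurable value functions, so under a suitable integrability assumption on $R_t$ it admits a unique fixed point by the Banach fixed-point theorem.

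Next I would derive the three regimes as explicit pairwise inequalities. Destruction dominates abandonment iff $2R_t - 2P > C_c$; stagnation dominates abandonment iff $\delta\,\mathbb{E}[V(R_{t+1})\mid R_t] > C_m$; stagnation dominates destruction iff
\[
\delta\,\mathbb{E}[V(R_{t+1})\mid R_t] - C_m \;>\; 2R_t - 2P - C_c,
\]
i.e.\ the discounted expected future potential loss net of maintenance outweighs the immediate collapse payoff net of the one-shot cost. Under a stochastic law of motion with conditional expected growth rate $g$ for $R_t$ and intrinsic per-period collapse probability $q$, these comparisons reduce to inequalities in $(g, q, \delta, C_c, C_m, R_t)$ whose gaps are jointly monotone --- the stagnation-minus-destruction gap is increasing in $g$ and $C_c$ and decreasing in $q$ and $C_m$ --- so each regime boundary is an implicit hypersurface delivered by the intermediate value theorem.

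The main technical obstacle will be ensuring that $V_{\mathrm{stag}}$ is finite when $R_t$ is allowed to grow stochastically, which requires a transversality condition such as $\delta\,\mathbb{E}[R_{t+1}/R_t\mid R_t] < 1$; without it, the adversary would indefinitely postpone destruction and the optimization would be ill-posed. Once finiteness is secured, the pairwise comparisons above partition the parameter space into exactly three nonempty open regions --- high growth with moderate $C_m$ yielding \emph{rational stagnation}, low growth or small $C_c$ yielding \emph{immediate destruction}, and prohibitively high $C_c$ and $C_m$ relative to the realizable potential loss yielding \emph{intervention abandonment} --- which is precisely the trichotomy asserted by the theorem. The full construction, including the integrability and discount conditions, is carried out in Appendix~\ref{sec:appendix_dynamic_general_jp}.
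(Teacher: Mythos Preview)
Your proposal follows the same core route as the paper---set up a discounted Bellman equation in the state $R_t$ with the cooperative-surplus gap $\Phi_t=2R_t-2P$ as the stopping payoff and $\delta E_t[V_{t+1}]-C_m$ as the continuation payoff, then read off the regimes by comparing action values---so the argument is sound and matches the paper's sketch in Appendix~\ref{sec:appendix_dynamic_general_jp}. Two differences are worth noting. First, you treat \emph{abandonment} as an explicit third action with value $0$ and obtain the trichotomy from three pairwise inequalities; the paper instead writes a two-branch (stop/continue) Bellman equation and only afterward introduces abandonment as a heuristic ``cost-differential band'' $|\Delta_t|\le|\Delta C_t|$, where $\Delta_t=\delta E_t[V_{t+1}]-\Phi_t$ and $\Delta C_t=C_m-C_c$. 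Your formulation is cleaner and actually derives abandonment as an optimal action rather than a buffer zone. Second, you correctly flag the transversality issue for finiteness of $V_{\mathrm{stag}}$; the paper does not address this, and its Corollary~\ref{cor:stagnation} in fact identifies rational stagnation with the regime $\delta(1+g_t)>1$, where discounted growth exceeds unity---so the paper's derivation is effectively a one-step-ahead comparison rather than a well-posed infinite-horizon fixed point, and your caveat is a genuine strengthening. Your explicit collapse hazard $q$ is likewise absent from the paper's formal model, though it matches the theorem's verbal reference to ``intrinsic risk of collapse.''
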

\begin{proof}
A sketch of the proof is based on the analysis of the generalized \emph{Bellman equation} presented in Appendix~\ref{sec:appendix_dynamic_general_jp}.
\end{proof}

This Theorem shows that the paper’s central contribution—\textbf{rational stagnation}—is not a phenomenon confined to special conditions, but a robust equilibrium that can emerge over a broad parameter space. In this formulation, the stopping value assumes that the post-collapse outcome is the absorbing state $(D,D)$. Under this assumption we have the expected future value $V_{t+1}=0$, and the stopping term $(2R_t-2P)-C_c(t)$ is defined as the period-$t$ realized loss (see Appendix~\ref{eq:bellman_general_jp}).

\subsection{Analysis of Rational Stagnation under Ideal Conditions}
\label{sec:ideal_case_jp}

To clarify the implications of Theorem~\ref{thm:general_dichotomy_jp}, consider an ideal environment in which intervention costs are zero and cooperation invariably engenders learning.

\begin{postulate}[Postulate of Cooperative Learning]
\label{post:learning}
Assume a socio-economic environment in which repeated cooperation promotes learning and efficiency, yielding a nondecreasing cooperative payoff $R_{t+1} \ge R_t$.
\end{postulate}

Under these ideal conditions, Theorem~\ref{thm:general_dichotomy_jp} yields the following sharper conclusion.

\begin{corollary}[Sufficient Condition for Rational Stagnation]
\label{cor:stagnation}
In an environment that satisfies the Postulate of Cooperative Learning (Postulate~\ref{post:learning}) with zero intervention costs, the optimal strategy for a sufficiently patient adversary (with a high discount factor $\delta$) is always \textbf{rational stagnation}.

Specifically, the adversary delays collapse as long as the growth rate $g_t$ of the \emph{cooperative surplus} $\Phi_t = 2R_t - 2P$ exceeds her discount rate, i.e., as long as
\begin{equation} \label{eq:stopping_condition_jp}
\delta > \frac{1}{1+g_t}.
\end{equation}
\end{corollary}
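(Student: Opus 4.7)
The plan is to specialize the Bellman equation of Theorem~\ref{thm:general_dichotomy_jp} to the ideal setting of zero costs and exploit the monotone growth of the cooperative surplus. Writing $\Phi_t := 2R_t - 2P$ for the period-$t$ harvestable loss and $\delta$ for the discount factor, setting $C_c(t)=C_m(t)=0$ reduces the adversary's problem to the optimal-stopping recursion
\begin{equation*}
V_t = \max\{\Phi_t,\; \delta V_{t+1}\},
\end{equation*}
so \emph{rational stagnation} at period $t$ is optimal precisely when $\delta V_{t+1} \ge \Phi_t$.

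First I would derive a sharp one-step sufficient condition for continuation. Using the trivial lower bound $V_{t+1} \ge \Phi_{t+1} = (1+g_t)\Phi_t$, obtained by postponing destruction exactly one period, continuation strictly dominates stopping whenever $\delta(1+g_t) > 1$, i.e.\ whenever $\delta > 1/(1+g_t)$, which is exactly inequality~\eqref{eq:stopping_condition_jp}. The Postulate guarantees $g_t \ge 0$ and hence $1/(1+g_t) \le 1$, so a sufficiently patient adversary---$\delta$ close enough to $1$---verifies the threshold at every period where $g_t>0$.

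Next I would upgrade this pointwise dominance to a global conclusion by forward iteration of the Bellman equation. If $\delta > 1/(1+g_s)$ holds for all $s \ge t$, then for every horizon $n$ we obtain $V_t \ge \delta^{n}\Phi_{t+n} = \Phi_t\prod_{s=t}^{t+n-1}\delta(1+g_s) > \Phi_t$, so the ``stop now'' option is never invoked: the optimal stopping time is $+\infty$ and the adversary stays inside the \textbf{fragile cooperation band} indefinitely. A short converse---noting that under $\delta \le 1/(1+g_t)$ any admissible future policy yields $\delta V_{t+1} \le \Phi_t$, with equality attainable by stopping at $t+1$---shows the threshold is sharp.

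The main obstacle I anticipate is the well-posedness of $V_t$ when $\Phi_t$ grows without bound: the Bellman operator need not contract in the classical sense if $\delta(1+g_t)>1$ persistently, so $V_t$ may be infinite. I would address this either by truncating at a finite horizon $T$, verifying that non-stopping is optimal for every $T$ and passing to the limit, or by reinterpreting $V_t$ directly as the supremum over admissible stopping times in the extended reals. Either route cleanly isolates the behavioural content of the corollary from purely analytic bookkeeping.
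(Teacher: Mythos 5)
Your core argument is exactly the paper's: set $C_c=C_m=0$, bound $V_{t+1}\ge\Phi_{t+1}=(1+g_t)\Phi_t$ via the postpone-one-period policy, and conclude that continuation dominates stopping whenever $\delta(1+g_t)>1$, which is \eqref{eq:stopping_condition_jp}; Appendix~\ref{sec:appendix_dynamic_general_jp} records precisely this one-step comparison together with the same remark that the bound survives full dynamic optimization because $V_{t+1}\ge\Phi_{t+1}$. Your forward iteration to an infinite optimal stopping time and your attention to well-posedness of $V_t$ when $\delta(1+g_t)>1$ persistently are genuine additions that the paper glosses over, and the finite-horizon truncation you suggest is the right fix. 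One caveat: your claimed converse is false as stated --- under $\delta\le 1/(1+g_t)$ it is not true that every admissible continuation yields $\delta V_{t+1}\le\Phi_t$, since $V_{t+1}$ can strictly exceed $\Phi_{t+1}$ when later growth rates are large (take $g_t=0$ but $\delta^2(1+g_{t+1})>1$); the threshold is therefore not sharp period by period, but since the corollary asserts only a sufficient condition this does not affect the result.
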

\begin{proof}
This corollary follows directly from the generalized \emph{Bellman equation} underlying Theorem~\ref{thm:general_dichotomy_jp} by setting $C_c = C_m = 0$ and imposing $E_t[R_{t+1}] \ge R_t$ (see Appendix~\ref{sec:appendix_dynamic_general_jp}).
\end{proof}

This corollary distills the central phenomenon analyzed in this paper: the mechanism by which the adversary transforms from a pure destroyer into a \emph{governor of unrealized potential}.

\section{Discussion: Applications and Interpretations}
\label{sec:discussion}

\subsection{Introduction}

The previous section established a formal theory of \emph{adversarial rationality} and derived a general Theorem (Theorem~\ref{thm:general_dichotomy_jp}) characterizing the rational adversary’s optimal strategy. The Theorem showed that the adversary’s behavior is determined by a trade-off among the expected future growth of the cooperative surplus, the system’s intrinsic instability, and the cost of \emph{adversarial intervention}. As a result, three \emph{strategic regimes} emerge: \emph{immediate destruction}, \emph{rational stagnation}, and \emph{abandonment of intervention}.

This section applies the general theory to social phenomena and assesses its explanatory power and observational implications. The focus here is the \textbf{empirical interpretation} of the adversary’s utility function
\[
u_D = U_{ideal} - U_{actual}
\]
defined in the previous section. The ideal ($U_{\text{ideal}}$) is defined as the best attainable state of stability, trust, and cooperation in the system. Empirically, it may be instantiated by the maximum values on satisfaction or trust indices, or by institutionally specified optimal benchmarks. By contrast, the actual ($U_{\text{actual}}$) is the realized measurement, a dynamic variable that reflects social noise and institutional frictions. The difference $U_{ideal} - U_{actual}$ simultaneously quantifies the health of a \emph{cooperative system} and serves as an empirical variable capturing the remaining scope for \emph{adversarial intervention}.

Moreover, the theoretical thresholds $w_{\min}$ and $w_{\max}$ for the \emph{ratio of mutual recognition} are, in real societies, observed not as single tipping points but as an \emph{empirical tipping band}. That is, there exists a range over which the gradients of trust, cooperation, and exit (churn) rates change markedly, revealing a probabilistic zone that separates stability from instability. Within this band, the actions of players and the adversary are mutually adjusted, and \textbf{rational stagnation} is sustained as an equilibrium.

Building on this observational framework, this section presents two case studies.\\
\textbf{Case 1: Adversarial rationality in social media.} Here, a healthy and trustworthy deliberative space constitutes the ideal ($U_{\text{ideal}}$), whereas polarization, misinformation, and conflict define the actual ($U_{\text{actual}}$). Platform-internal algorithms, oriented toward engagement maximization, manipulate the balance between cooperation and conflict, maintaining instability short of full collapse. This structure demonstrates that \emph{rational stagnation} can arise not as a by-product of irrational behavior, but as the outcome of a \emph{strategic rationality} embedded in a rational objective function.

\textbf{Case 2: Meta-game intervention in political and institutional systems.} In this context, societal trust in the state or institutions constitutes the ideal ($U_{\text{ideal}}$), while realized trust scores and political-stability indicators constitute the actual ($U_{\text{actual}}$). External players (states, media, or other strategic actors) intervene in the trust architecture of rival polities to maximize their own payoffs. The aim is not necessarily \emph{immediate destruction}, but rather to keep the rival’s growth potential and trust level \emph{within a manageable range} so as to secure competitive advantage. In this way, the \emph{meta-game} appears as a hierarchical structure in which multiple layers of rational players influence one another.

Through these two cases, the section situates \textbf{rational stagnation} as an \emph{observable equilibrium structure} in social and institutional domains. Phenomena such as stagnation, polarization, and impasse are thus understood not as side effects of irrationality, but as the outcomes of \textbf{rational intervention by a suppressive adversary}.

\subsection{Definitions of the Ideal and the Actual, and an Observational Framework for Intervention Variables}
The utility function at the core of this paper’s theory, $u_D = U_{ideal} - U_{actual}$, is not a mere abstraction. This difference is defined as a gap between \emph{observable} social states, thereby endowing the theory with empirical testability. This section clarifies the observational framework and lays the groundwork for the analyses in the subsequent case studies.

\textbf{Ideal ($U_{\text{ideal}}$)} is defined as the structurally optimal state of each system—namely, the \emph{expected value when human and institutional stability and trust function fully}. In practice, it is often fixed as a normative upper bound or as the maximum on particular survey indices.

\textbf{Actual ($U_{\text{actual}}$)} is the currently observed measurement. Whereas the ideal is a static benchmark, the actual is a dynamically fluctuating variable.

The adversary’s strategic objective is not to alter the ideal itself (which is often institutionally fixed and thus not manipulable), but to \emph{intentionally widen the gap} between the ideal and the actual. Accordingly, \textbf{adversarial intervention} consists in \emph{controlling the set of “actual-side” variables that generate the ideal–actual gap}. These variables function as empirically identifiable \emph{intervention levers}, to be operationalized through social surveys and data analysis.

\subsection{Ambiguous Tipping Points and the Probabilistic Structure of Rational Stagnation}
\label{sec:probabilistic_tipping}

Real social systems seldom transition at sharp thresholds. Rather, tipping points are probabilistically ambiguous and appear as \emph{tipping bands} in which multiple local equilibria overlap. As observed in institutional or community satisfaction surveys, trust indices, and exit (churn) rates, the empirical distribution of \emph{exit, switching, and stability} is continuous, and individual decisions are subject to noise.

Consequently, the deterministic thresholds $w_{\min}$ and $w_{\max}$ in the theoretical model should be interpreted, for practical purposes, as \emph{empirical tipping bands}. Concretely, one identifies parameter ranges over which the gradients of exit rates or trust indices change markedly, and sets those intervals empirically as
\[
[w_{\min}^-, w_{\min}^+] \quad \text{and} \quad [w_{\max}^-, w_{\max}^+] .
\]
Such fuzzy boundary settings allow us to treat tipping as a smooth probabilistic process, thereby enhancing the real-world plausibility of the theory in analyses of social stability.

Within this framework, \textbf{rational stagnation} can be understood as a probabilistic equilibrium sustained \emph{inside} the tipping band. The \textbf{rational adversary} seeks to keep the actual state $U_{\text{actual}}$ separated from the ideal $U_{\text{ideal}}$ while maintaining the probability of collapse at sufficiently low levels. Institutions and communities, for their part, act in a self-correcting manner to avoid collapse. The result is a \emph{persistent yet unstable} state. Empirically, this equilibrium manifests as slow fluctuations within the tipping band, consistent with the equilibrium logic of adversarial intervention.

\subsection{Case Study 1: Platform Algorithms as an Endogenous Adversary}

\subsubsection{Modeling Assumptions}
We model a social-media platform as a \emph{meta-player} in a repeated game played by its users.
\begin{enumerate}
    \item \textbf{Players and cooperative payoff:} The players are platform users. When they share truthful, high-quality information and engage in constructive deliberation, a collective \emph{cooperative payoff} $R_t$ (shared understanding, networks of trust, and social capital) is generated.
    \item \textbf{Adversary and utility function:} The \emph{adversary} is the platform’s algorithmic architecture itself. Its utility (i.e., platform revenue) is not a direct function of $R_t$, but is maximized by user \emph{engagement} (time-on-platform, interactions).
    \item \textbf{Engagement maximization:} Engagement is assumed to be maximized in environments where disagreement persistently arises.
    \begin{itemize}
        \item \textbf{Cooperative/stable state ($w > w_{\max}$):} A state dominated by stable agreement and trust exhibits high predictability and weak incentives for users to post rebuttals, corrections, or identity-affirming content. This implies lower engagement.
        \item \textbf{Fragile cooperation band ($w_{\min} \le w \le w_{\max}$):} An unstable coexistence of cooperation and conflict most strongly amplifies user interaction and maximizes engagement.
    \end{itemize}
    \item \textbf{Intervention costs and risk:} The adversary’s \emph{intervention cost} $C_m(t)$ is understood as a composite \emph{risk-management} cost required to maintain fragility. Excessive polarization ($w < w_{\min}$) triggers firestorms, the spread of hate speech, and public controversies, risking mass user exit and regulatory sanctions. Expenditures on content moderation, public relations, and lobbying are included in these costs.
\end{enumerate}

\subsubsection{Ideal and Actual in the SNS Model}
In social media, the \textbf{ideal} ($U_{\text{ideal}}$) is the \emph{maintenance of a healthy and trustworthy deliberative space}. Operationally, this can be defined as attaining the maximum values on user-satisfaction or interpersonal-trust indices. The ideal represents the best social state the platform could achieve—stable cooperative networks, low misinformation, and a healthy dialogic structure.

By contrast, the \textbf{actual} ($U_{\text{actual}}$) is defined by the \emph{observed prevalence} of polarization, fake news, and conflict relative to these ideal indicators. Concrete data include average survey scores, trust indices, counts of large-scale “firestorms,” and misinformation diffusion rates. These variables permit a quantitative measurement of deviation from the ideal, and the adversary (the algorithm) maximizes engagement by \emph{controlling this deviation}.

Accordingly, the “actual-side” variables function as \emph{intervention levers} for the adversary. Their principal dimensions are pairs such as cooperation vs.\ division, trust vs.\ distrust, and truth vs.\ misinformation. By shifting the equilibrium among these paired dimensions, the algorithm induces coexistence of conflict and harmony—i.e., the \textbf{fragile cooperation band} ($w_{\min} \le w \le w_{\max}$). A \emph{rational} algorithm manipulates these variables while avoiding collapse ($w < w_{\min}$), thereby sustaining \textbf{rational stagnation}.

\subsubsection{Equilibrium of Rational Stagnation}
Under this parameterization, Theorem~\ref{thm:general_dichotomy_jp} yields a clear prediction for the algorithm’s (adversary’s) optimal strategy.
\begin{itemize}
    \item \textbf{Rejection of \emph{immediate destruction}:} Driving the platform into collapse (moving the state to $w < w_{\min}$) destroys the user base and revenue sources and invites legal intervention; it is therefore not optimal.
    \item \textbf{Rejection of \emph{intervention abandonment}:} Neutralizing the algorithm risks allowing the community to drift toward a stable cooperative state ($w > w_{\max}$), naturally reducing engagement; this is unattractive.
    \item \textbf{Selection of \emph{rational stagnation}:} The optimal strategy is to pay ongoing risk-management costs $C_m(t)$ sufficient to avoid catastrophic collapse, while maintaining the conflict structure that maximizes engagement within the \textbf{fragile cooperation band} ($w_{\min} \le w \le w_{\max}$).
\end{itemize}
This model explains why engagement-maximizing algorithms neither fully eliminate misinformation nor precipitate wholesale network collapse. The observed polarization and epistemic degradation are not unintended chaos; rather, they are a \emph{stable equilibrium} rationally maintained as the calculated outcome of a trade-off between profitability and risk management.

\subsection{Case Study 2: Meta-game Intervention by an Exogenous Adversary}

\subsubsection{Modeling Assumptions: A Hierarchical Game}
We now treat the situation in Case Study~1 as a subgame and introduce external players who play a higher-level \emph{meta-game}.
\begin{enumerate}
    \item \textbf{Structure of the meta-game:} At the upper layer, players such as states and political organizations compete in a meta-game where payoffs consist of geopolitical influence and advantages in domestic politics.
    \item \textbf{Intervention into the subgame:} Players in the meta-game (the \emph{exogenous adversaries}) intervene in lower-layer games (specific online communities) as a tactic to achieve their own objectives.
    \item \textbf{Cooperative surplus and the adversary’s utility:} The cooperative surplus $R_t$ of the target community (trust, cohesion, and collective problem-solving capacity) constitutes part of the rival’s state capacity and support base in the meta-game. Accordingly, the exogenous adversary’s utility is defined by the community’s \emph{potential loss}, $U_{ideal} - U_{actual}$. Community dysfunction (a decrease in $U_{\text{actual}}$) weakens the rival and thus yields direct gains to the adversary in the meta-game.
    \item \textbf{Low-cost attacks:} By exploiting the Internet’s \emph{hyper-public} character and the fact that target communities are \emph{open yet clustered} due to language and platform segmentation, adversaries can operate at very low intervention cost $C_m(t)$. Signal amplification by bots and the injection of misinformation are cheaper and less risky than physical covert operations.
\end{enumerate}

\subsubsection{Ideal and Actual in the Meta-game Model}
The definitional structure of the \textbf{ideal} and the \textbf{actual} extends isomorphically to the meta-game with an exogenous adversary. Here, the \textbf{ideal} ($U_{\text{ideal}}$) is the state in which societal trust in the targeted polity or institution is fully maintained—operationally, “$100\%$ on trust surveys.” This corresponds to the maximal point of institutional stability and public confidence and can be specified clearly as a numeric upper bound grounded in observed survey data, rather than as a purely abstract notion of perfect cooperation.

The \textbf{actual} ($U_{\text{actual}}$) is the realized level of societal trust measured by surveys, observed as the shortfall from the ideal. The adversary’s aim is not to lower the ideal benchmark but to \emph{micro-manage the actual} so as to keep society within the region where \emph{full trust recovery is thwarted yet collapse is avoided}, i.e., within the \textbf{fragile cooperation band} $w_{\min} \le w \le w_{\max}$.

Intervention operates through observables that co-vary with trust. Examples include the volume of coverage that heightens perceptions of political distrust, narratives that amplify economic insecurity, and emotionally charged information flows that intensify social division. These variables are statistically correlated with trust scores; by means of information operations and the staging of economic/diplomatic events, the adversary can subtly perturb these correlates and thereby steer societal trust.

In this setting, the gap $U_{ideal} - U_{actual}$ becomes the key indicator linking a system’s \emph{latent health} to the \emph{intensity of adversarial intervention}. The ideal is an institutionally fixed benchmark; the actual is a fluctuating observable; and the intervention variables are empirically measurable correlates. This structure renders the dynamics of social and political intervention amenable to formal, testable analysis.

\subsubsection{Equilibrium of Rational Stagnation}
Within this hierarchical game, Theorem~\ref{thm:general_dichotomy_jp} again predicts the exogenous adversary’s optimal strategy.
\begin{itemize}
    \item \textbf{Rejection of \emph{immediate destruction}:} Driving the target community into complete collapse (moving to $w < w_{\min}$) inflicts, at best, a one-off blow on the rival. Moreover, unintended regime shocks may destabilize the meta-game itself, creating uncontrollable risks.
    \item \textbf{Rejection of \emph{intervention abandonment}:} Allowing the community to recover a stable cooperative state ($w > w_{\max}$) advantages the rival in the meta-game and is therefore not an option.
    \item \textbf{Selection of \emph{rational stagnation}:} The optimal strategy is to avoid outright destruction while keeping persistent conflict and distrust smoldering within the \textbf{fragile cooperation band} ($w_{\min} \le w \le w_{\max}$). In this way, the adversary can, at low cost and low risk, continuously erode the rival’s state capacity and social foundations, \emph{maximizing long-run strategic returns denominated in “potential loss.”}
\end{itemize}

\subsubsection{Cultivating Fragility: The Rational Adversary’s Paradoxical Support}
The theory yields a more refined strategic implication. The adversary’s goal is not merely to produce stagnation. To maximize $u_D = U_{ideal} - U_{actual}$, it can be \emph{desirable} that the system’s \emph{cooperative surplus} $R_t$ \textbf{grows}. The higher the expected growth rate of $R_t$, the larger the future harvestable \emph{potential loss} in expectation.

This logic predicts seemingly paradoxical behavior: the most \emph{meta-rational} player may \emph{stoke division} so as to keep the community in the \textbf{fragile cooperation band} ($w_{\min} \le w_{\max}$) while \textbf{quietly supporting the community’s potential growth} (e.g., promoting user-base expansion or stimulating economic activity). For instance, the adversary might anonymously support infrastructure or seed information that attracts new members.

This strategy—best described as advanced \textbf{vulnerability farming}—is distinct from mere sabotage. The adversary preserves instruments of intervention to prevent the community from achieving autonomous stability ($w > w_{\max}$), while \emph{raising its latent value} so as to maximize extractable strategic gains over time. This is fully consistent with Theorem~\ref{thm:general_dichotomy_jp}, which implies that \emph{rational stagnation} is chosen precisely when the system’s expected growth is high. The adversary is thus both a destroyer and, ironically, the system’s most attentive watcher of its potential.

\section{Conclusion and Future Directions}
\label{sec:conclusion}

\subsection{Summary of Theoretical Contributions}
This study extends the classical game-theoretic paradigm by introducing and formalizing the concept of the \textbf{rational adversary}. By defining the adversary’s utility through the \textbf{principle of potential loss} ($u_{D}=U_{ideal}-U_{actual}$), the paper shows that inefficiency in \emph{cooperative systems} can exist not merely as malfunction, but as an equilibrium state \emph{actively maintained} by a rational agent—namely, \textbf{rational stagnation}.

The central result is the general theory, stated in Theorem~\ref{thm:general_dichotomy_jp}, that characterizes the adversary’s optimal strategy. The theorem formally derives a tripartite regime—\emph{immediate destruction}, \emph{rational stagnation}, and \emph{abandonment of intervention}—as a function of the system’s expected growth, intrinsic risk, and intervention costs. Among these, \textbf{rational stagnation} is a forward-looking, speculative strategy that recognizes the system’s latent growth while suppressing its realization; it is thus clearly distinct from the conservative and static behavior posited in vested-interests models.

\subsection{Implications for AI and Institutional Design}
This theory carries the following implications for AI alignment and institutional design.

\textbf{Implications for AI systems:}
AI systems designed to optimize proxy metrics in complex social environments (e.g., user engagement) can \emph{endogenously} develop the strategic structure analyzed here for the \textbf{rational adversary}. That is, even without explicit hostility, the \emph{maintenance of fragility} within a controllable range may become optimal relative to the objective function. This perspective reframes phenomena such as “specification gaming” and “reward hacking” in AI alignment not merely as design flaws but as the \emph{rational} consequences of optimization.

\textbf{Implications for institutional design:}
If the adversary’s strategy suppresses the \emph{cooperative surplus} ($R_t$), then the planner’s optimal response is to design institutional rules that \emph{maximize the growth rate} of cooperation. Robust institutional design can thus be formalized as a \emph{counter-optimization problem} that minimizes the payoffs to adversarial stagnation strategies. From this standpoint, a “resilient institution” is not merely stable; it is one whose design maximizes the expected gains from cooperative learning while structurally reducing the profitability of \emph{maintenance of fragility} by a \textbf{rational adversary}.

\subsection{Theoretical Extensions: Structural Adversaries and Adversarial Mechanism Design}
This paper analyzed the rationality of a \emph{psychological adversary} who operates within an existing rule system by manipulating beliefs. A key direction for future theory is to treat the rationality of a \textbf{structural adversary}—an agent capable of manipulating the rules themselves.

The structural adversary’s objective is to modify the system-wide payoff structure $(T, R, P, S)$ in ways that \emph{institutionally extend} the effective domain of psychological manipulation. The present framework offers a foundation for understanding why such structural interventions can be rational. By quantifying the tactical value of belief manipulation, it opens a path to analyze institutional design itself as an object of rational intervention.

This agenda constitutes a research program that may aptly be called \textbf{adversarial mechanism design}: a theoretical framework that describes how a rational agent—whose objective is not the maximization of social welfare but the maximization of future \emph{manipulability and fragility}—can design or revise institutions. This paper provides the foundational theoretical structure for that program.

\subsection{Limitations and Assumptions of the Theory}
\label{sec:limitations}
The model of \textbf{rational stagnation} presented in this paper offers a unified representation of intervention dynamics in adversarial games through the differential utility between the ideal and the actual. While the framework is extendable to diverse settings, its scope is bounded by the following key limitations and assumptions.

\subsection*{(1) Definition of the Ideal and Normative Premises}
The ideal utility $U_{\text{ideal}}$ is a normatively determined upper bound that depends on institutions and culture, and its specification may involve an element of arbitrariness. Although this paper mitigates arbitrariness by using observed upper bounds, the assumption that the ideal reflects ethical or social “good” cannot be entirely eliminated. See also the reference-dependent generalization in Appendix~\ref{appendix:math_jp}, \S\ref{sec:appendix_diffpayoff}, where shift-stability is established (Proposition~\ref{prop:ref-shift}).

\subsection*{(2) Directionality of the Differential and Identification}
The structure $u_D = U_{ideal} - U_{actual}$ treats the \emph{reduction} of the gap as rational behavior, yet when the ideal itself moves upward (e.g., rising expectations), improvements in the actual may not reduce the gap. It is therefore necessary to observe the ideal and the actual independently and to define explicitly the meaning of their difference. 
In Appendix~\ref{sec:appendix_diffpayoff}, Proposition~\ref{prop:ref-shift} shows a quantitative \emph{shift-stability}: regime decisions are preserved whenever the margin of the boundary inequalities exceeds $\frac{L_{\gamma,i}L|\kappa_{\mathrm{ref}}|}{1-\delta}$.

\subsection*{(3) Time Lags and Dynamic Stability}
Changes in $U_{actual}(t)$ exhibit inertia and delay, so instantaneous differentials alone may be insufficient to explain stability. To address this, the analysis introduces a generalized \emph{Bellman equation} that endogenizes the discount factor $\delta$ and the maintenance-cost structure $C_m(t)$. Nonetheless, empirical identification of these parameters remains limited. For mass simultaneous interventions, Appendix~\ref{sec:appendix_mass_intervention} gives a one-line macro dynamics with a local stability threshold based on $J_t:=1+G_t-\rho$ (Proposition~\ref{prop:local-stability}).

\subsection*{(4) Bounded Rationality and Nonlinear Preferences}
Nonlinear sensitivities among players are captured endogenously by a function $F(w)$. Under the assumption $F'(w) \ge 0$, the model can reproduce diminishing empathy and tipping effects without explicitly imposing bounded rationality. However, the functional form of $F(w)$ is context-dependent, and identifying a general form empirically is nontrivial. Complementarily, the difference-based payoff in Appendix~\ref{sec:appendix_diffpayoff} accommodates asymmetric nonlinear sensitivities via $(\beta_i^\pm,\gamma_i^\pm)$ without changing the regime logic.

\subsection*{(5) Testability and Domain of Application}
For empirical validation, each domain requires clear definitions and calibration of:
\begin{itemize}
    \item the observational indicators of the ideal utility $U_{\text{ideal}}$
    \item the realized measurements of the actual utility $U_{\text{actual}}$
    \item proxy variables for the \textbf{ratio of mutual recognition} $w=b/a$
    \item (for mass interventions) calibration of $(\beta^\pm,\gamma^\pm)$, sensitivity $\eta$, diffusion gain $\kappa$, and damping $\rho$ (Appendix~\ref{sec:appendix_mass_intervention})
\end{itemize}
These steps ensure that the theory avoids tautology and remains \emph{falsifiable}.

\vspace{1em}
In sum, while grounded in the simple structure of an \emph{ideal–actual} differential, the framework is general enough to incorporate dynamics, nonlinearity, multiple agents, and bounded rationality. Applications should, however, proceed with awareness of the normative definition of the ideal and the limits of empirical estimation. Accordingly, this paper proposes not a universal theory but a \emph{general model as a minimal, empirically testable structure}.

\subsection{Conclusion}
The theory of the \textbf{rational adversary} offers a new analytical framework for explaining why \emph{cooperative systems} fail to realize their full potential. As shown in this paper, persistent stagnation can be a \emph{strategically stable state} maintained rationally rather than a mere dysfunction. In environments where \textbf{maintenance of fragility} serves as a lever of control, stagnation can become a rational choice.

By formalizing this theory, phenomena previously labeled as “irrational failures” can be analyzed as \emph{rational strategies}. In any social or epistemic system, the ultimate task is not merely to sustain cooperation, but to establish \emph{institutional designs that protect the processes of learning and cooperation from a rational adversary}.

\clearpage
\appendix
\section{Mathematical Appendix}
\label{appendix:math_jp}

\subsection{Formal Definition of the Adversarial Game}
\label{sec:appendix_definitions_jp}

Consider a three-player game $G = (\{A,B\}, D, \{u_A, u_B\}, u_D)$ with standard players $A$ and $B$ and an \textbf{adversary} $D$. The objective payoff functions of players $A$ and $B$ follow the classical Prisoner’s Dilemma (PD) matrix parameterized by $T>R>P>S$.

Each player’s \emph{subjective utility} $u'_i$ is altered by the adversary via psychological coefficients $(a,b)$:
\begin{equation} \label{eq:appendix_utility_jp}
    u'_i = a u_i + b u_j, \quad (i \neq j, a>0, b \ge 0).
\end{equation}

The adversary’s utility $u_D$ is defined by the \textbf{principle of potential loss} and depends on the players’ objective outcomes:
\begin{equation} \label{eq:appendix_adversary_utility_jp}
    u_D = U_{\text{ideal}} - U_{\text{actual}} = 2R - (u_A + u_B).
\end{equation}

\subsection{Derivation of Equilibrium Conditions and the Fragile Cooperation Band}
\label{sec:appendix_equilibrium_jp}

Under the transformed utilities in \eqref{eq:appendix_utility_jp}, Nash equilibrium conditions follow from best-response analysis:
\begin{itemize}
    \item $(C,C)$ is a Nash equilibrium $\iff a(R-T)+b(R-S) \ge 0 \iff w \ge \frac{T-R}{R-S} =: w_{\min}$,
    \item $(D,D)$ is a Nash equilibrium $\iff a(P-S)-b(T-P) \ge 0 \iff w \le \frac{P-S}{T-P} =: w_{\max}$,
\end{itemize}
where $w=b/a$ denotes the \textbf{ratio of mutual recognition}. A necessary and sufficient condition for the existence of the \textbf{dual equilibrium band} (i.e., the interval on which both $(C,C)$ and $(D,D)$ are equilibria, coinciding with the \emph{fragile cooperation band}) is $w_{\min} \le w_{\max}$, which is equivalent to $(T-R)(T-P) \le (P-S)(R-S)$.

\subsection{Static Analysis (One-Shot Game)}
\label{sec:appendix_static_jp}

In a one-shot interaction, the adversary’s utility is maximized when the sum $(u_A+u_B)$ is minimized.
Under the PD payoff ordering $T>R>P>S$, the minimum attainable total payoff is
\[
\min\{\,2P,\;T+S\,\}.
\]
Hence, when $2P \le T+S$, the profile $(D,D)$ attains the minimum total payoff; when $T+S < 2P$, an asymmetric outcome $(C,D)$ or $(D,C)$ does so.
In the baseline specification of this paper, we maintain symmetry by assuming $T+S \ge 2P$.

\subsection{General Dynamic Model: Intervention Costs and Stochastic Payoffs}
\label{sec:appendix_dynamic_general_jp}

We construct a more general dynamic model that relaxes the \emph{Postulate of Cooperative Learning} and endogenizes intervention costs. Let the discount factor satisfy $0 < \delta < 1$.

\paragraph{Constituent elements of the model}
\begin{itemize}
    \item \textbf{Stochastic cooperative payoff:} The cooperative payoff $R_t$ at time $t$ is assumed to fluctuate probabilistically in the next period. Let $E_t[R_{t+1}]$ denote the expectation of $R_{t+1}$ at time $t$.
    \item \textbf{Intervention costs:} The adversary incurs costs when taking actions at time $t$:
    \begin{itemize}
        \item $C_c(t)$: cost of \emph{collapse induction} (e.g., expenditures on a disinformation campaign),
        \item $C_m(t)$: cost of \emph{maintenance of fragility} (e.g., ongoing monitoring and interventions required to sustain social division).
    \end{itemize}
\end{itemize}

\paragraph{Supplement to the derivation of Corollary~\ref{cor:stagnation}}
Define the \emph{cooperative surplus} gap by $\Phi_t := 2R_t - 2P \ge 0$, and let its one-period growth be
$\Phi_{t+1} = (1+g_t)\Phi_t$ (with $g_t > -1$). Setting $C_c=C_m=0$ and comparing the \emph{stop} and \emph{continue} options one step ahead yields
\[
\text{Continuation is preferred}\;\Longleftarrow\;\delta\,\mathbb{E}_t[\Phi_{t+1}]>\Phi_t
\;\Longleftrightarrow\;\delta\,(1+g_t)\,\Phi_t>\Phi_t
\;\Longleftrightarrow\;\delta>\frac{1}{1+g_t}.
\]
This is a sufficient condition; even under dynamic optimization (with $V_{t+1}\ge \Phi_{t+1}$), the same inequality supports a preference for continuation.

\paragraph{Generalized Bellman Equation}
Let $V_t$ denote the \textbf{value function} for the adversary at time $t$. In each period the adversary chooses between \emph{stopping (collapse induction)} and \emph{continuing (maintenance of fragility)}:
\begin{equation} \label{eq:bellman_general_jp}
    V_t = \max\left\{ \underbrace{(2R_t - 2P) - C_c(t)}_{\text{stop}}, \; \underbrace{\delta E_t[V_{t+1}] - C_m(t)}_{\text{continue}} \right\}.
\end{equation}
This equation shows that the \textbf{rational adversary} weighs the immediate net payoff against the discounted \textbf{expected future value} of net payoffs.

\noindent\textit{Reference-dependent extension.} 
A unified, difference-based payoff (including positive/negative sensitivity and mass simultaneous interventions) 
is developed in Appendix~\ref{appendix:math_jp}, 
\S\ref{sec:appendix_diffpayoff}–\S\ref{sec:appendix_mass_intervention}.

\paragraph{Assumption of an Absorbing Post-Collapse State}
If the adversary stops (induces collapse) at time $t$, the system transitions to an absorbing state $\mathsf{Collapse}$ in which $A$ and $B$ subsequently repeat $(D,D)$, fixing the objective total payoff permanently at $2P$. The adversary’s realized payoff at time $t$ is therefore
\[
u_D^{\text{stop}}(t) \;=\; \underbrace{U_{\text{ideal}}(t)}_{=\,2R_t}\;-\;\underbrace{U_{\text{actual}}^{\mathsf{Collapse}}}_{=\,2P}\;-\;C_c(t)\;=\;(2R_t-2P)-C_c(t).
\]
Because absorption eliminates future choices, we set $V_{t+1}=0$, which yields the stopping term in Equation~\eqref{eq:bellman_general_jp}.
\emph{Remark}: In extensions that allow $R_{t+\tau}$ to vary exogenously even after collapse, the immediate value of stopping remains $2R_t-2P$, and the disappearance of any option value is unchanged.

\paragraph{Continuation and Manipulation of the Fragile Band}
Let the \textbf{state variables} be $s_t:=(w_t,R_t)$, and let the adversary’s control $u_t\in\mathcal{U}$ induce the stochastic dynamics
\[
w_{t+1}=G(w_t,u_t,\xi_t),\qquad R_{t+1}=H(R_t,\eta_t),
\]
with noise terms $\xi_t,\eta_t$. Define the \textbf{fragile cooperation band} $\mathcal{B}:=\{w: w_{\min}\le w\le w_{\max}\}$. A continuation strategy belongs to a policy class $\Pi$ that selects $u_t$ so as to keep $w_{t+1}\in\mathcal{B}$ with high probability (or in expectation). Writing the \textbf{cooperative surplus} gap as $\Phi_t=2R_t-2P$, the value function becomes
\[
V(s_t)=\max\Big\{\underbrace{\Phi_t-C_c(t)}_{\text{stop}},\;
\underbrace{\sup_{u_t\in\mathcal{U}} \big(\delta\,\mathbb{E}[V(s_{t+1})\mid s_t,u_t]-C_m(t,u_t)\big)}_{\text{continue}}\Big\},
\]
which makes explicit that the continuation term in Equation~\eqref{eq:bellman_general_jp} represents the \emph{option value obtained from the expected maintenance of the fragile cooperation band} under optimal control within $\Pi$.

\subsubsection{Conditions Determining Strategic Regimes}
\label{sec:appendix_thresholds_jp}

\noindent\textit{Assumptions.}\;
$0 < \delta < 1,\quad C_c(t)\ge 0,\quad C_m(t)\ge 0,\quad T>R>P>S.$

As solutions to Equation~\eqref{eq:bellman_general_jp}, we obtain the three \emph{strategic regimes} corresponding to Theorem~\ref{thm:general_dichotomy_jp}. For brevity, define
\begin{equation}
    \Delta_t \coloneqq \delta E_t[V_{t+1}] - (2R_t - 2P),
    \qquad
    \Delta C_t \coloneqq C_m(t) - C_c(t),
\end{equation}
where $\Delta C_t$ is the \textbf{cost differential} (\(\Delta C_t\)).

\paragraph{(i) Immediate destruction}
\begin{equation}
    (2R_t - 2P) - C_c(t) \ge \delta E_t[V_{t+1}] - C_m(t)
    \quad\Longleftrightarrow\quad
    \Delta_t \le -\,\Delta C_t.
\end{equation}

\paragraph{(ii) Rational stagnation}
\begin{equation}
    \delta E_t[V_{t+1}] - (2R_t - 2P) > C_m(t) - C_c(t)
    \quad\Longleftrightarrow\quad
    \Delta_t > \Delta C_t.
\end{equation}

\paragraph{(iii) Intervention abandonment}
\begin{equation}
    |\Delta_t| \le |\Delta C_t|.
\end{equation}
This band can be interpreted as the \emph{rational region of nonintervention}, reflecting the cost structure and forecast errors.

\paragraph{Critical thresholds (boundary conditions)}
\begin{align}
    \delta E_t[V_{t+1}] &= (2R_t - 2P) + (C_m(t) - C_c(t)) \quad &\text{(tipping into stagnation)},\\
    \delta E_t[V_{t+1}] &= (2R_t - 2P) - (C_m(t) - C_c(t)) \quad &\text{(tipping from destruction to abandonment)}.
\end{align}

With only two actions (stop/continue), the equations above are indifference points, and infinitesimal perturbations determine the choice. When \emph{nonintervention} is allowed as a third alternative, the condition $|\Delta_t| \le |\Delta C_t|$ serves as the practical band boundary.

\paragraph{Relation to This Paper’s Model}
These thresholds constitute the theoretical boundaries of the three \emph{strategic regimes} presented in Theorem~\ref{thm:general_dichotomy_jp}. The model introduced in Section~\ref{sec:analytical_results} is a special case that assumes the \emph{Postulate of Cooperative Learning} ($E_t[R_{t+1}] \ge R_t$) and sets $C_c=C_m=0$.

\subsection{Robustness to Nonlinear Generalizations}
\label{sec:appendix_nonlinear_jp}

The linear \emph{subjective utility} in Equation~\eqref{eq:appendix_utility_jp} is chosen for parsimony and analytical tractability. However, the core finding—the existence of a \textbf{dual equilibrium band} coincident with the \textbf{fragile cooperation band}—is robust to nonlinear formulations of social preference.

Consider the more general specification:
\begin{equation}
    u'_i = a u_i + a F(w) u_j, \quad \text{where } w = b/a.
\end{equation}
Here, $F(w)$ captures the effective influence of the \textbf{ratio of mutual recognition} $w$ and is assumed to satisfy:
\[
\begin{aligned}
&F(0)=0,\\
&F(w)\ \text{is continuous and nondecreasing},\\
&F'(w)\ge 0,\\
&F(w)\in[0,1].
\end{aligned}
\]
These are minimal conditions ensuring that stronger other-regarding recognition does not reduce own utility.

Under these assumptions, the equilibrium conditions generalize as follows:
\begin{itemize}
    \item $(C,C)$ is a Nash equilibrium $\iff F(w) \ge w_{\min}$,
    \item $(D,D)$ is a Nash equilibrium $\iff F(w) \le w_{\max}$.
\end{itemize}
Hence, the \textbf{dual equilibrium band} exists for any $w$ such that $w_{\min} \le F(w) \le w_{\max}$.

It follows that the existence of the fragile band is not an artifact of the linear model but a property \emph{endogenous to the payoff structure of the game}. The shape of $F(w)$—concave (diminishing empathy) or convex (tipping-point effects)—determines the width and location of the band; yet as long as monotonicity holds, the theoretical implication of \emph{equilibrium multiplicity} remains robust.

\subsection{Reference-Dependent Difference-Based Payoff}
\label{sec:appendix_diffpayoff}

\paragraph{Setup.}
Let $x_t \in \mathbb{R}$ denote an observable system state (e.g., $U_{\text{actual}}(t)$), 
$\hat{x}_t$ the agent’s forecast, and $x^\ast$ a reference (normative) level.
Define three differences:
\[
\begin{aligned}
\Delta x_t &\coloneqq x_t - x_{t-1} &&(\text{change}),\\
\epsilon_t &\coloneqq x_t - \hat{x}_t &&(\text{surprise}),\\
\xi_t      &\coloneqq x_t - x^\ast   &&(\text{norm deviation}).
\end{aligned}
\]

\paragraph{General payoff (reference-dependent).}
For agent $i$,

\begin{align}
U_i(x_t;\theta_i)
&= \alpha_i\,g_1(\Delta x_t)
 + \beta_i^+\,g_2((\epsilon_t)_+)
 + \beta_i^-\,g_2((\epsilon_t)_-) \nonumber\\
&\quad
 + \gamma_i^+\,g_3((\xi_t)_+)
 + \gamma_i^-\,g_3((\xi_t)_-)
 + \delta_i\,h(x_t) - c_i.
\end{align}

where $(z)_+=\max\{z,0\},\ (z)_-=\max\{-z,0\}$, $g_k$ are continuous (locally Lipschitz), $g_k(0)=0$, and $h$ is bounded on the relevant state space.
These weak regularity conditions preserve the contraction property used in \S\ref{sec:appendix_dynamic_general_jp}.

\paragraph{Mapping to the main adversary.}
Let $x_t=U_{\text{actual}}(t)$ and $x^\ast=U_{\text{ideal}}(t)$. 
The adversary utility $u_D=U_{\text{ideal}}-U_{\text{actual}}$ corresponds to
\[
\begin{aligned}
\alpha_i&=0,\quad \beta_i^\pm=0,\\
\gamma_i^+&=\gamma_i^- > 0,\quad g_3(z)=z,\quad \delta_i=0
\end{aligned}
\]
up to a positive affine transform. A growth-averse variant uses $\delta_i<0$ with $h(x)=x$.

\begin{proposition}[Reference-shift stability]\label{prop:ref-shift}
Assume $g_3$ is $L$-Lipschitz on the relevant domain in $|\xi|$, and the reference is shifted
by a constant $\kappa_{\mathrm{ref}}\in\mathbb{R}$ at all dates (forecasts $\hat{x}_t$ and dynamics do not depend on $x^\ast$).
Let $L_{\gamma,i}:=\max(\gamma_i^+,\gamma_i^-)$.
Then, for any policy and state $s$, the value function satisfies
\[
\bigl|V^{(\kappa_{\mathrm{ref}})}(s)-V^{(0)}(s)\bigr|
\;\le\; \frac{L_{\gamma,i}\,L\,|\kappa_{\mathrm{ref}}|}{1-\delta}.
\]
Consequently, the regime inequalities in \S\ref{sec:appendix_dynamic_general_jp} remain unchanged whenever their
margin exceeds $\frac{L_{\gamma,i}\,L\,|\kappa_{\mathrm{ref}}|}{1-\delta}$.
\end{proposition}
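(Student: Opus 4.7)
The plan is to reduce the claim to a uniform per-period payoff bound and then propagate it through the discounted dynamic program by a Banach-contraction argument; the regime-preservation corollary then follows by comparing the perturbed and unperturbed optimality thresholds.

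First, I would isolate the effect of the reference shift on a single period. Because $\hat{x}_t$ and the state dynamics do not depend on $x^*$, the quantities $\Delta x_t$, $\epsilon_t$, $h(x_t)$, and $c_i$ are invariant. Only the reference-dependent summand $\gamma_i^+ g_3((\xi_t)_+) + \gamma_i^- g_3((\xi_t)_-)$ is affected, with $\xi_t$ replaced by $\xi_t - \kappa_{\mathrm{ref}}$. Applying the $L$-Lipschitz property of $g_3$, factoring out $L_{\gamma,i} = \max(\gamma_i^+,\gamma_i^-)$, and using the elementary identity
\[
\bigl|(z-\kappa)_+ - (z)_+\bigr| + \bigl|(z-\kappa)_- - (z)_-\bigr| = |\kappa| \qquad (z,\kappa\in\mathbb{R}),
\]
which I would verify by a short case analysis on the signs of $z$ and $z-\kappa$, yields a uniform one-period bound of $L_{\gamma,i}\,L\,|\kappa_{\mathrm{ref}}|$.

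Next, I would promote this bound to the value function. For any fixed policy, the Bellman operator $T^{(\kappa)}$ is a $\delta$-contraction in the supremum norm, and the one-period bound above shows that $T^{(\kappa_{\mathrm{ref}})}$ and $T^{(0)}$ differ by at most $L_{\gamma,i}\,L\,|\kappa_{\mathrm{ref}}|$ uniformly in the iterate. Writing $V^{(\kappa_{\mathrm{ref}})}$ and $V^{(0)}$ for the respective fixed points and combining the triangle inequality with the contraction factor,
\[
\sup_s \bigl|V^{(\kappa_{\mathrm{ref}})}(s)-V^{(0)}(s)\bigr|
\;\le\; \delta\,\sup_s \bigl|V^{(\kappa_{\mathrm{ref}})}(s)-V^{(0)}(s)\bigr| + L_{\gamma,i}\,L\,|\kappa_{\mathrm{ref}}|,
\]
which rearranges to the claimed $L_{\gamma,i} L |\kappa_{\mathrm{ref}}|/(1-\delta)$ bound. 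Since the regime conditions in \S\ref{sec:appendix_thresholds_jp} compare the stopping value (a per-period payoff, hence perturbed by at most $L_{\gamma,i}\,L\,|\kappa_{\mathrm{ref}}|$) with the discounted continuation value (perturbed by at most the value-function bound above), any regime inequality whose margin exceeds $L_{\gamma,i} L |\kappa_{\mathrm{ref}}|/(1-\delta)$ is preserved under the shift, so the regime classification is unchanged.

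The main obstacle is the identity in the per-period step: a naive triangle-inequality bound yields the weaker factor $\gamma_i^+ + \gamma_i^-$ rather than $L_{\gamma,i}$. Recovering the tight max-based constant requires the observation that for each realized $\xi_t$ at most one of the two positive-part summands contributes a nontrivial deviation, so the two deviations sum to exactly $|\kappa_{\mathrm{ref}}|$ rather than $2|\kappa_{\mathrm{ref}}|$. The contraction step is otherwise routine provided $h$ is bounded (so that value functions lie in a sup-normed space and the fixed-point iteration is well defined), and the regime-preservation claim is then an immediate comparison.
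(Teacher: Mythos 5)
Your proposal is correct and follows essentially the same route as the paper's sketch: a uniform stage-payoff perturbation bound of $L_{\gamma,i}\,L\,|\kappa_{\mathrm{ref}}|$, amplified by the factor $\tfrac{1}{1-\delta}$ (the paper sums the discounted series directly where you invoke the fixed-point/contraction inequality, but these are interchangeable), followed by the margin comparison for regime preservation. Your explicit identity $\bigl|(z-\kappa)_+-(z)_+\bigr|+\bigl|(z-\kappa)_--(z)_-\bigr|=|\kappa|$ is a worthwhile addition, since it is exactly what justifies the constant $\max(\gamma_i^+,\gamma_i^-)$ that the paper asserts without detail, where a naive term-by-term Lipschitz bound would only give $\gamma_i^++\gamma_i^-$.
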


\begin{proof}[Sketch]
A shift $x^\ast\mapsto x^\ast+\kappa_{\mathrm{ref}}$ perturbs each stage payoff by at most $L_{\gamma,i} L|\kappa_{\mathrm{ref}}|$ (by Lipschitz continuity of $g_3$).
Discounting and summing over time yields $\sum_{t\ge0}\delta^t L_{\gamma,i} L|\kappa_{\mathrm{ref}}| = \frac{L_{\gamma,i} L|\kappa_{\mathrm{ref}}|}{1-\delta}$.
Hence, for any state $s$, \(|V^{(\kappa_{\mathrm{ref}})}(s)-V^{(0)}(s)| \le \frac{L_{\gamma,i} L|\kappa_{\mathrm{ref}}|}{1-\delta}\).
Thus regime-sign tests change only if their original gaps are smaller than this bound. \qedhere
\end{proof}

\subsection{Mass Simultaneous Intervention (Buzz / Backlash)}
\label{sec:appendix_mass_intervention}

We model many small agents acting simultaneously (e.g., like/share/praise vs. attack/criticize). Using $U_i$ from \S\ref{sec:appendix_diffpayoff}, the positive/negative sensitivity to surprise and norm deviation is separated by $(\beta_i^+,\beta_i^-)$ and $(\gamma_i^+,\gamma_i^-)$.

\paragraph{Aggregate response.}
With a heterogeneous cost distribution and a logistic response,
\[
\begin{aligned}
P_t &= \sigma\!\left(\eta[\beta^+ g_2((\epsilon_t)_+) + \gamma^+ g_3((\xi_t)_+)] - \bar c\right),\\
N_t &= \sigma\!\left(\eta[\beta^- g_2((\epsilon_t)_-) + \gamma^- g_3((\xi_t)_-)] - \bar c\right),
\end{aligned}
\]
where $P_t$ is the praise rate (buzz), $N_t$ the attack rate (backlash), $\eta>0$ sensitivity, $\bar c$ a representative cost, and $\sigma(z)=1/(1+e^{-z})$ is the logistic function.
\noindent
In the aggregate equations we use population averages and write $(\beta^\pm,\gamma^\pm)$ for representative values.

\paragraph{State update.}
We consider the one-line macro dynamics
\[
x_{t+1} = x_t + \kappa\,(P_t - N_t) - \rho\,(x_t-\bar x),
\]
where $\kappa>0$ is the diffusion gain, $\rho>0$ is damping, and $\bar x$ is a baseline.

\paragraph{Local resonance (buzz/backlash) condition.}
Let $\sigma(z)=1/(1+e^{-z})$, and define
\[
\begin{aligned}
s_t^+ &:= \eta[\beta^+ g_2((\epsilon_t)_+) + \gamma^+ g_3((\xi_t)_+)] - \bar c,\\
s_t^- &:= \eta[\beta^- g_2((\epsilon_t)_-) + \gamma^- g_3((\xi_t)_-)] - \bar c.
\end{aligned}
\]
Linearizing around a steady state (treating $\hat{x}_t$ and $x^\ast$ as locally exogenous so that
$\partial \epsilon_t/\partial x_t=\partial \xi_t/\partial x_t=1$) yields
\[
{
\begin{aligned}
G_t := \kappa\Big(&\sigma'(s_t^+)\,\eta\,[\beta^+ g_2'((\epsilon_t)_+) \mathbf{1}_{\{\epsilon_t>0\}}
+ \gamma^+ g_3'((\xi_t)_+) \mathbf{1}_{\{\xi_t>0\}}]\\
&{}- \sigma'(s_t^-)\,\eta\,[\beta^- g_2'((\epsilon_t)_-) \mathbf{1}_{\{\epsilon_t<0\}}
+ \gamma^- g_3'((\xi_t)_-) \mathbf{1}_{\{\xi_t<0\}}]\Big),
\end{aligned}
}
\]

where $\sigma'(z)=\sigma(z)(1-\sigma(z))$.
At kinks ($\epsilon_t=0$ or $\xi_t=0$), $g_k'$ denotes a one-sided derivative (or subgradient).

\begin{proposition}[Local stability threshold]\label{prop:local-stability}
Let $J_t := 1 + G_t - \rho$. Then:
\begin{enumerate}
  \item (\emph{Stability}) If $|J_t|<1$ (equivalently, $\rho-2 < G_t < \rho$), perturbations decay locally.
  \item (\emph{Buzz / monotone escape}) If $J_t>1$ (equivalently, $G_t>\rho$), the fixed point is locally unstable with monotone divergence (“buzz”).
  \item (\emph{Backlash / alternating escape}) If $J_t<-1$ (equivalently, $G_t<\rho-2$), the fixed point is locally unstable with alternating divergence (“backlash”).
\end{enumerate}
\end{proposition}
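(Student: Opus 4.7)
The plan is to view the update rule as a one-dimensional discrete-time map $x_{t+1}=f(x_t)$ with
\[
f(x) = x + \kappa\bigl(P(x)-N(x)\bigr) - \rho(x-\bar x),
\]
linearize around a candidate fixed point $x^\star$, and then invoke the classical scalar linearization theorem. The whole task reduces to two subtasks: (a) compute $f'(x^\star)$ and show it equals $J$; and (b) read off the three cases from the sign and magnitude of $J$.

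For (a), I would simply apply the chain rule. Using the exogeneity conditions $\partial\epsilon/\partial x=\partial\xi/\partial x=1$, each sigmoid composition $\sigma(s^\pm(x))$ differentiates through $\sigma$ (producing $\sigma'(z)=\sigma(z)(1-\sigma(z))$), through the constants $\eta,\beta^\pm,\gamma^\pm$, and through $g_k$, with the indicators $\mathbf{1}_{\{\epsilon>0\}},\mathbf{1}_{\{\epsilon<0\}},\mathbf{1}_{\{\xi>0\}},\mathbf{1}_{\{\xi<0\}}$ arising from the positive/negative decompositions $(\cdot)_\pm$. Summing the pieces yields $\kappa\,\partial_x(P-N)=G$, and adding the two remaining terms $+1$ (from $x_t$) and $-\rho$ (from the damping) gives $f'(x^\star)=1+G-\rho=J$, as stated. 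For (b), set $y_t:=x_t-x^\star$; Taylor expansion gives $y_{t+1}=J\,y_t+O(y_t^2)$, hence $|y_t|\sim |J|^{\,t}|y_0|$ to leading order for $|y_0|$ sufficiently small. The three cases are then immediate: $|J|<1$ produces $y_t\to 0$ (local asymptotic stability); $J>1$ produces $|y_t|\to\infty$ with $y_t$ retaining the sign of $y_0$ (monotone escape, i.e.\ \emph{buzz}); $J<-1$ produces $|y_t|\to\infty$ with $y_t$ alternating in sign (\emph{backlash}). The equivalent rephrasings in terms of $G$ follow by substituting $J=1+G-\rho$.

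The main obstacle, and really the only non-routine point, is the non-smoothness of $f$ at the kinks $\epsilon=0$ and $\xi=0$ introduced by $(\cdot)_\pm$. Away from these points $f$ is $C^1$ in a neighbourhood of $x^\star$ and the argument above is entirely standard. Right at a kink I would apply the linearization on each one-sided branch separately, using the one-sided $g_k'$ (or subgradient) already flagged in the statement; if $|J|<1$ holds on both branches local stability still follows, and analogously for the two instability cases. Beyond this kink bookkeeping, the proof is pure chain rule plus the classical scalar-map linearization theorem.
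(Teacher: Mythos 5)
Your proposal is correct and follows essentially the same route the paper takes: the paper's preceding paragraph defines $G_t$ precisely as the linearization $\kappa\,\partial_x(P_t-N_t)$ around the steady state, so the Jacobian of the scalar map is $J_t=1+G_t-\rho$ and the three cases are the standard dichotomy for a one-dimensional linearized map, with the kinks handled by the one-sided derivatives already stipulated in the statement. Your explicit treatment of the one-sided branches at $\epsilon_t=0$ or $\xi_t=0$ is a welcome bit of care that the paper only gestures at.
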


\paragraph{Remark (design).}
Strengthening $(\beta^+,\gamma^+)$ and/or reducing $(\beta^-,\gamma^-)$ while increasing $\rho$ pushes the system out of the fragile band towards sustained cooperation (defensive design intent).

\clearpage
\section{Extensibility (Research Outlook)}
\label{sec:appendix_extensions_jp}

This core framework can be extended to more complex scenarios.

\subsection{Dynamics with Multiple Adversaries}
Consider a set of competing adversaries $\mathcal{D} = \{D_1, \dots, D_n\}$. Model the evolution of system fragility $w_t$ as a function of the aggregate of their interventions. The problem then becomes a game among adversaries. Coordination among adversaries may, paradoxically, stabilize fragility, whereas rivalry can precipitate \emph{premature and suboptimal collapse}.

\subsection{Networked Fragility}
Let a network $G=(V,E)$ represent subsystems (nodes) each with a local fragility $w_i(t)$. Adversarial influence can propagate along network links. In this setting, overall system stability depends on the spectral properties of the network’s adjacency matrix, providing a bridge to models of \emph{systemic risk} and \emph{information cascades}.

\clearpage
\bibliographystyle{plain}
\bibliography{references}
\end{document}